\setlist[itemize]{nosep,topsep=3pt,itemsep=3pt}
\setlist[enumerate]{nosep,topsep=3pt,itemsep=3pt}
\newif\ifbackrefshowonlyfirst
\let\BR@direct@old@hyper@natlinkstart\hyper@natlinkstart
\renewcommand*{\hyper@natlinkstart}{\phantomsection\BR@direct@old@hyper@natlinkstart}
\let\BR@direct@oldBR@citex\BR@citex
\renewcommand*{\BR@citex}{\phantomsection\BR@direct@oldBR@citex}%
\long\def\hyper@page@BR@direct@ref#1#2#3{\hyperlink{#3}{#1}}
    \let\backrefxxx\hyper@page@BR@direct@ref
\patchcmd{\Hy@backout}{Doc-Start}{\@currentHref}{}{\errmessage{I can't seem to patch backref}}
\@citea\NAT@hyper@{%
     \NAT@nmfmt{\NAT@nm}%
     \hyper@natlinkbreak{\NAT@aysep\NAT@spacechar}{\@citeb\@extra@b@citeb}%
     \NAT@date}}
\@citea\NAT@nmfmt{\NAT@nm}%
\NAT@spacechar\NAT@hyper@{\NAT@date}}{}{}
\@citea\NAT@hyper@{%
     \NAT@nmfmt{\NAT@nm}%
     \hyper@natlinkbreak{\NAT@spacechar\NAT@@open\if*#1*\else#1\NAT@spacechar\fi}%
       {\@citeb\@extra@b@citeb}%
     \NAT@date}}
\@citea\NAT@nmfmt{\NAT@nm}%
\fi\NAT@hyper@{\NAT@date}}
\titlespacing{\section}{0pt}{1.5ex}{0ex}
\titlespacing{\subsection}{0pt}{1.5ex}{0ex}
\titlespacing{\subsubsection}{0pt}{1ex}{0ex}
\titlespacing{\paragraph}{0pt}{1.5ex}{1ex}
\titleformat*{\paragraph}{\bfseries\itshape}
\theoremstyle{definition}
\theoremstyle{plain}
\newtheorem{thm}{Theorem} 
\theoremstyle{plain}
\definecolor{lime}{HTML}{A6CE39}
\DeclareRobustCommand{\orcidicon}{
  \begin{tikzpicture}
  \draw[lime, fill=lime] (0,0)
  circle [radius=0.16]
  node[white] {{\fontfamily{qag}\selectfont \tiny ID}};
  \draw[white, fill=white] (-0.0625,0.095)
  circle [radius=0.007];
  \end{tikzpicture}
  \hspace{-2mm}
}
\newcommandx{\jnote}[2][1=]{\todo[linecolor=red,backgroundcolor=red!25,bordercolor=red,#1]{J: #2}}
\newcommandx{\onote}[2][1=]{\todo[linecolor=green,backgroundcolor=green!25,bordercolor=green,#1]{O: #2}}
\newcommandx{\knote}[2][1=]{\todo[linecolor=orange,backgroundcolor=orange!25,bordercolor=orange,#1]{K: #2}}
\newcommandx{\improvement}[2][1=]{\todo[linecolor=Plum,backgroundcolor=Plum!25,bordercolor=Plum,#1]{#2}}
\newcommandx{\thiswillnotshow}[2][1=]{\todo[disable,#1]{#2}}
\begin{document}
\title{An explicit vector algorithm for high-girth MaxCut}
\author[1,2]{Jessica K. Thompson\thanks{jktho@umd.edu}}
\author[1]{Ojas Parekh\thanks{odparek@sandia.gov}}
\author[3]{\orcidA{}Kunal Marwaha\thanks{marwahaha@berkeley.edu}}
\affil[1]{Sandia National Laboratories, Albuquerque, NM, USA}
\affil[2]{Joint Center for Quantum Information and Computer Science, University of Maryland, College Park, MD}
\affil[3]{Berkeley Center for Quantum Information and Computation, University of California, Berkeley, CA}
\setcounter{Maxaffil}{0}
\renewcommand\Affilfont{\itshape\small}

\date{}
\maketitle

\thispagestyle{empty}
\setcounter{page}{0}
\begin{abstract}
We give an approximation algorithm for MaxCut and provide guarantees on the average fraction of edges cut on $d$-regular graphs of girth $\geq 2k$.
For every $d \geq 3$ and $k \geq 4$, our approximation guarantees are better than those of all other classical and quantum algorithms known to the authors.
Our algorithm constructs an explicit vector solution to the standard semidefinite relaxation of MaxCut and applies hyperplane rounding.
It may be viewed as a simplification of the previously best known technique, which approximates Gaussian wave processes on the infinite $d$-regular tree.
\end{abstract}
\newpage

\section{Introduction}
Finding the maximum cut in a graph is a canonical NP-hard discrete optimization problem, and one of the simplest constraint satisfaction problems.
However, to date, MaxCut does not have optimal, efficient approximation algorithms on most families of graphs.
Approximation algorithms are typically measured by one of two metrics: the \emph{performance} (or cut fraction), which is proportional to the cut size; or the \emph{approximation ratio}, which is the ratio of cut size and optimal cut size (or cut fraction and optimal cut fraction).

Arguably the most famous strategy to approximate MaxCut is to solve its semidefinite programming (SDP) relaxation and round the solution to get a cut \citep{goemans1995improved}.
Under the Unique Games Conjecture (UGC), this algorithm has the optimal \emph{worst-case} approximation ratio $\alpha > 0.878$ \citep{Khot02,KhotKMO07,odonnell2008}.
In other words, the cut given is of size at least $\alpha$ times the optimal cut size on \emph{every graph}.
However, many graphs, such as large random graphs, have an optimal cut fraction less than $1/(2\alpha)$ \citep{dembo2017extremal}.  In such cases, as an $\alpha$-approximation, Goemans-Williamson only guarantees a cut fraction less than $1/2$. In contrast, randomly assigning vertices to one of two partitions has an expected cut fraction of $1/2$. This is why considering the \emph{cut fraction} of an algorithm is relevant: for example, one can show that Goemans-Williamson has cut fraction $1/2$, just as random assignment. 

Even if UGC is true, we do not have approximation algorithms on \emph{regular} graphs that have an optimal worst-case approximation ratio \citep{halperin2004}. 
Regular graphs also admit straightforward bounds on cut fraction.
Many approximation algorithms (including ours) bound the maximum cut size as a function of basic graph parameters, including the graph degree and number of edges~\citep{erdos1967even,shearer1992note}. 
One example is a local threshold algorithm by \cite{hirvonen2017large}, which first randomly assigns vertices to one of two partitions; then, switches the label of each vertex if the number of cut edges adjacent to it does not meet a threshold.
Although the algorithm is simple, analyzing it is not straightforward; the optimal threshold is not known \emph{a priori}, so it must be numerically tuned for each degree value $d$ to maximize performance.

Quantum algorithms, especially the Quantum Approximate Optimization Algorithm (QAOA), may be competitive with classical algorithms on many optimization problems \citep{farhi2014quantum, farhi2015quantum}.
The QAOA includes a parameter $p$ corresponding to its \emph{depth}; its performance under optimal parameter settings, as well as its complexity, monotonically increases with $p$.
Although hardware limitations of quantum computers prevent the use of the QAOA at high depth \citep{Harrigan2021},
the QAOA may be competitive even at lower depths.
For example, when the QAOA was proposed, the depth-1 version had the best known cut fraction guarantee on triangle-free regular graphs \citep{Wang2018,ryananderson2018quantum}.
However, \cite{hastings2019classical} numerically showed that the local threshold algorithm \citep{hirvonen2017large} outperforms the depth-1 QAOA.
At higher $p$ and progressively higher girth, both algorithms perform better; however, a practitioner has to tune $O(p)$ parameters to perform optimally, and few provable performance bounds are known \citep{farhi2014quantum, Marwaha2021localclassicalmax}.

Another style of approximation algorithm uses invariant Gaussian processes \citep{csoka2015invariant, csoka2016independent, lyons2017factors, deboor2019, barak2021classical}.
These algorithms are not local, but they generally perform better than the threshold and quantum algorithms described previously.
First, one constructs an invariant Gaussian process satisfying an eigenvector equation on an infinite $d$-regular tree (more precisely, a Bethe lattice).
This \emph{Gaussian wave} process can then be approximated by a factor of an i.i.d.\ process assuming an eigenvalue condition is met.
One can use this to derive an independent set \citep{csoka2015invariant} or a cut \citep{csoka2016independent,lyons2017factors}.
As $d$ and the girth go to infinity, these algorithms approach a cut fraction of $1/2 + 2/(\pi\sqrt{d})$.
However, the analysis is quite technical; it involves approximating eigenvectors of the adjacency matrix of the infinite $d$-regular tree.
In contrast, our approach only requires the eigenvectors of the adjacency matrix of a finite path, which are well understood and explicitly known in certain cases.  In addition, the algorithms of \cite{csoka2015invariant} and \cite{lyons2017factors} were designed to prove existential results, and there is no claim of computational efficiency. The algorithm in \cite{barak2021classical} has a simpler form, but its analysis is complicated at finite degree; it is only exactly bounded in the large-degree limit.  The algorithm of \cite{deboor2019} is the closest in spirit to ours; however, ours offers better performance.

\paragraph{Our contributions.} We design an algorithm that retains desirable features of the above approaches but is simple to implement and analyze.  As with \cite{lyons2017factors}, our algorithm's performance bound is a function of the degree $d$ and girth $\geq 2k$, and we do not require either to be large.  In addition to a simplified approach, for $d \geq 3$ and $k \geq 4$, our approximation guarantees are better than those of all algorithms we are aware of.

We employ hyperplane rounding on a feasible SDP solution; however, instead of solving the SDP, we construct an explicit vector solution.  The vectors collectively depend on $k$ parameters, and we optimize the SDP objective subject to these parameters.  This optimization problem is considerably simpler than solving the SDP and amounts to finding a minimum eigenvector of the adjacency matrix of a weighted path on $k$ vertices.
Moreover, we give a closed-form approximate solution which also outperforms all previous algorithms.

Our approach is inspired by the work of \cite{carlson2020}, who construct and round explicit vector solutions to find large cuts in graphs with few triangles and in $K_r$-free graphs.  Furthermore, it turns out that our algorithm may be viewed as an simplification of the Gaussian wave process, as discussed in Section~\ref{sec:gaussian-wave}.
Its performance, as a function of $d$ when $k$ goes to infinity, matches that of other algorithms based on invariant Gaussian processes.

We motivate our discussion with a comparison to the Goemans-Williamson algorithm and then explicitly define our algorithm and establish a lower bound on its performance.  We then connect our algorithm to the Gaussian wave process, and prove that our bound is higher than several other performance guarantees.

\subsection{A motivating example}
Given a graph $G= (V,E)$, where $n=|V|$ and $m=|E|$, one can describe the unweighted\footnote{Our results readily generalize to the nonnegatively weighted case, replacing $m$ with the sum of all edge weights.}
MaxCut problem by the following binary quadratic program:
\begin{align*}
\text{max}\ \frac{1}{2}\ \sum_{(i,j)\in E} (1-z_i \cdot z_j) \ \text{such that } z_i \in \{-1,1\}, \ \forall i \in V
\end{align*}
This can be relaxed to the following semidefinite program (SDP), where $S^n$ is the unit sphere in $n$-dimensional space:
\begin{align*}
\text{max}\ \frac{1}{2} \sum_{(i,j)\in E} (1-\mathbf{v}_i \cdot \mathbf{v}_j) \ \text{such that } \mathbf{v}_i \in S^n, \ \forall i \in V
\end{align*}
\cite{goemans1995improved} show that given any feasible solution to the SDP (i.e., any set of unit vectors, $\{\mathbf{v}_i\}_{i=1}^n$ where $\mathbf{v}_i \in S^n$) and a random standard Gaussian vector $\mathbf{r}\in S^n$, the cut given by $A=\{i\ |\mathbf{v}_i\cdot \mathbf{r}>0\}$ and $\Bar{A}=V/A$ has a lower bound on its expected cut size (denoted by $E[W]$):
\begin{align*}
E[W] & =  \frac{1}{\pi} \sum_{(i,j)\in E}  \arccos(\mathbf{v}_i \cdot \mathbf{v}_j)           
       \geq \alpha \cdot \frac{1}{2} \sum_{(i,j)\in E} (1-\mathbf{v}_i \cdot \mathbf{v}_j)
\end{align*}
This scheme is known as \emph{hyperplane rounding}: the partition of vertex $i$ is set by the sign of $\mathbf{r} \cdot \mathbf{v}_i$.

Since the optimal solution to the relaxed program has a value at least the optimal cut size, this algorithm guarantees a cut with approximation ratio $\alpha$. The cut fraction is not as easy to bound because it depends on the values of the inner products $\mathbf{v}_i \cdot \mathbf{v}_j$. 

It turns out that the SDP relaxation can be used to give provably large cuts when the graph is locally tree-like.
We use the Goemans-Willamson rounding scheme; however, instead of actually solving the SDP, we explicitly construct a vector solution where $\mathbf{v}_i \cdot \mathbf{v}_j$ is same for all edges and then round to obtain a cut.
Consider the following example from \cite{carlson2020}, which is the inspiration for this paper.
On a $d$-regular, triangle-free graph, define for every $i,j \in V$:
\begin{equation*}
v_{ij} = \begin{cases}
\frac{1}{\sqrt{2}}   & i=j              \\
\frac{-1}{\sqrt{2d}} & (i,j)\in E       \\
0                    & \text{otherwise}
\end{cases}
\end{equation*}
Let $\mathbf{v}_i = [v_{i1}, ..., v_{in}]^T$.
Since $\mathbf{v}_i \in S^n$,  the set $\{\mathbf{v}_i\}_{i=1}^n$ is a feasible solution to the above semidefinite program. 
Note that $\mathbf{v}_i \cdot \mathbf{v}_j$ is the same for all edges $(i,j) \in E$.
Applying the Goemans-Williamson hyperplane rounding scheme gives a cut with expected size
\begin{align*}
E[W] & = \frac{1}{\pi} \sum_{(i,j)\in E} \arccos(\mathbf{v}_i \cdot \mathbf{v}_j) = \frac{m}{\pi}  \arccos(\frac{-1}{\sqrt{d}}).
\end{align*}
This algorithm outperforms depth-1 QAOA at every value of $d$  \citep{Wang2018}.
Although it is not as good as the threshold algorithm from \cite{hirvonen2017large}, we can generalize the vector assignment shown above for graphs with progressively higher girth.
On each of these graphs, the explicit vector solution has a performance guarantee above that of \cite{lyons2017factors}, the depth-1 and depth-2 QAOA, and  the threshold algorithm and its depth-2 extension in \cite{Marwaha2021localclassicalmax}.

\section{Vector assignments to the semidefinite program}
\label{sec:mainresult}

Consider a $d$-regular graph, $G= (V,E)$, where $n=|V|$ and $m=|E|$, with graph girth at least $2k$.
For each vertex, we define a vector $\mathbf{v}_i$ for all $i\in V$, where $\mathbf{v}_i = [v_{i1}, ..., v_{in}]^T$, and
\begin{equation*}
v_{ij} = \begin{cases}
\alpha_0    & i=j               \\
\alpha_\ell & dist(i,j) = \ell < k \\
0           & \text{otherwise}.
\end{cases}
\end{equation*}
Here, $dist(i,j)$ is the length of the shortest path between $i$ and $j$.
The algorithm is then to round the vectors $\mathbf{v}_i$ using the Goemans-Williamson rounding scheme.
We choose $\alpha_\ell$ to optimize the cut fraction. Notice that $\mathbf{v}_i \cdot \mathbf{v}_j$ is the same for all $(i,j)\in E$. Minimizing this value, where the vectors are restricted to being unit length, becomes the following optimization problem:
\begin{align*}
\min 2\alpha_0 \alpha_1 + 2(d-1)\alpha_1 \alpha_2 + \cdots + 2(d-1)^{k-2}\alpha_{k-2} \alpha_{k-1} \\ 
\text{such that } \alpha_0^2 + d\alpha_1^2 + d(d-1)\alpha_2^2 + \cdots + d(d-1)^{k-2}\alpha_{k-1}^2 = 1
\end{align*}
By scaling the $\alpha_\ell$'s, the problem becomes a minimization of a quadratic function over $S^k$:
\begin{gather*}
\min_{\|\boldsymbol{\beta}\|_2 = 1} \frac{2}{\sqrt{d}} \beta_0 \beta_1 + \frac{2\sqrt{d-1}}{d} \beta_1 \beta_2 + \cdots + \frac{2 \sqrt{d-1}}{d} \beta_{k-2} \beta_{k-1} = \min_{\|\boldsymbol{\beta}\|_2 = 1} \boldsymbol{\beta}^T A_k \boldsymbol{\beta} \\
\boldsymbol{\beta} = [\beta_0, ..., \beta_{k-1}]^T \\
\beta_\ell = \begin{cases}
\alpha_0                           & \ell = 0 \\
\alpha_\ell \sqrt{d(d-1)^{\ell-1}} & \ell > 0
\end{cases}
\end{gather*}
The minimum $\mathbf{v}_i \cdot \mathbf{v}_j$ is then the minimum eigenvalue of the following $k \times k$ matrix:
\[ A_k = \begin{bmatrix} 0 & a & 0      & 0      & 0      & \cdots \\
a & 0 & b      & 0      & 0      & \cdots \\
0 & b & 0      & b      & 0      & \cdots \\
&   & \ddots & \ddots & \ddots &        \\
0 & 0 & \cdots & b      & 0      & b      \\
0 & 0 & \cdots & 0      & b      & 0\end{bmatrix}
\quad a = \frac{1}{\sqrt{d}} \quad b = \frac{\sqrt{d-1}}{d}\]
The optimal $\beta_\ell$ are the entries of the corresponding eigenvector.
The $\alpha_\ell$ are then set from the $\beta_\ell$.

We now state our main result:
\begin{thm}
\label{thm:main}
Given a finite $d$-regular graph $G= (V,E)$ where $m = |E|$ with girth at least $2k$, there is an edge cut of $G$ with size
\begin{equation*}
 E[W] \geq \frac{m}{\pi} \arccos(\sigma_{d,k}),
\end{equation*}
such that 
$$\sigma_{d,k} \le \frac{-2\sqrt{d-1}}{d}
\Big( \cos  \frac{ \pi}{k+1}
+ \big(\sqrt{\frac{d}{d-1}} - 1\big) \frac{2}{k+1} \sin  \frac{\pi}{k + 1}  \sin  \frac{2 \pi}{k + 1} 
\Big)
<  \frac{-2\sqrt{d-1}}{d}
 \cos  \frac{ \pi}{k+1}\text{.}$$
\end{thm}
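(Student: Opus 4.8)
The statement bundles two claims: the cut-size bound $E[W]\ge \frac{m}{\pi}\arccos(\sigma_{d,k})$ and the explicit upper estimate on $\sigma_{d,k}$. The first is bookkeeping from the setup: the vectors are built so that $\mathbf{v}_i\cdot\mathbf{v}_j$ takes the common value $\sigma_{d,k}=\min_{\|\boldsymbol{\beta}\|=1}\boldsymbol{\beta}^T A_k\boldsymbol{\beta}$ on every edge, so the Goemans--Williamson identity gives expected cut size $\frac{1}{\pi}\sum_{(i,j)\in E}\arccos(\mathbf{v}_i\cdot\mathbf{v}_j)=\frac{m}{\pi}\arccos(\sigma_{d,k})$, and a cut meeting at least the expectation exists. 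The real content is to bound the minimum eigenvalue $\sigma_{d,k}$ of $A_k$ from above; since $\arccos$ is decreasing, an upper bound on $\sigma_{d,k}$ is exactly what produces the lower bound on $E[W]$.

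To bound $\sigma_{d,k}$, I would decompose $A_k = b\,T_k + (a-b)\,E$, where $T_k$ is the adjacency matrix of the path on $k$ vertices (all off-diagonal entries equal to $1$) and $E$ is the symmetric matrix whose only nonzero entries are a $1$ in positions $(0,1)$ and $(1,0)$. The matrix $T_k$ has the explicitly known spectrum $2\cos\frac{j\pi}{k+1}$, $j=1,\dots,k$, with eigenvectors $u^{(j)}_\ell=\sin\frac{j(\ell+1)\pi}{k+1}$. Its minimum eigenvalue $-2\cos\frac{\pi}{k+1}$ is attained at $j=k$, where the eigenvector takes the alternating form $u_\ell=(-1)^\ell\sin\frac{(\ell+1)\pi}{k+1}$, so that $u_0=\sin\frac{\pi}{k+1}>0$ while $u_1=-\sin\frac{2\pi}{k+1}<0$.

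The plan is then a single Rayleigh--Ritz step: feed this $u$ into the quotient for $A_k$. Because $u$ is an eigenvector of $T_k$, the quadratic form splits as $u^T A_k u = b\,(-2\cos\frac{\pi}{k+1})\|u\|^2 + (a-b)\,2u_0u_1$. Using the standard identity $\|u\|^2=\sum_{m=1}^k\sin^2\frac{m\pi}{k+1}=\frac{k+1}{2}$ together with $2u_0u_1=-2\sin\frac{\pi}{k+1}\sin\frac{2\pi}{k+1}$, dividing through by $\|u\|^2$ yields $\sigma_{d,k}\le -2b\cos\frac{\pi}{k+1}-\frac{4(a-b)}{k+1}\sin\frac{\pi}{k+1}\sin\frac{2\pi}{k+1}$. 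Substituting $b=\frac{\sqrt{d-1}}{d}$ and $a-b=\frac{\sqrt{d}-\sqrt{d-1}}{d}$, and using the rewriting $\frac{2\sqrt{d-1}}{d}(\sqrt{d/(d-1)}-1)=2(a-b)$, puts this into precisely the displayed closed form.

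The final strict inequality is then immediate: the correction term is strictly negative, since $a>b$ (as $\sqrt{d}>\sqrt{d-1}$) and $\sin\frac{\pi}{k+1},\sin\frac{2\pi}{k+1}>0$ for $k\ge 2$, so $\sigma_{d,k}$ sits strictly below the unperturbed value $-\frac{2\sqrt{d-1}}{d}\cos\frac{\pi}{k+1}$. The one place needing genuine care --- and the crux of the whole argument --- is this sign: the bound works only because the heavier first edge $a>b$ is paired against a test vector whose first two entries have opposite signs, so the rank-two perturbation $(a-b)E$ \emph{lowers} rather than raises the Rayleigh quotient. Getting the indexing and sign conventions of the path eigenvector right is therefore the main thing to watch; the rest is the normalization identity and routine algebra. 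I do not expect to need the exact minimum eigenvalue of the perturbed matrix, which would satisfy a messier transcendental equation, since the variational upper bound is exactly what the cut-size lower bound calls for.
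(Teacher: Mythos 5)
Your proposal is correct and takes essentially the same route as the paper: the paper's test vector $\mathbf{w}$, with $w_\ell = \sqrt{2/(k+1)}\,\sin\frac{(\ell+1)k\pi}{k+1}$, is exactly your normalized path eigenvector $u/\|u\|$, and its computation $\mathbf{w}^T A_k \mathbf{w} = 2(a-b)w_0 w_1 + \lambda_{\min}(B_k)$ with $B_k = b\,T_k$ is your decomposition $A_k = b\,T_k + (a-b)E$ in the Rayleigh--Ritz step. The paper likewise gets the strict inequality from $a > b$ and $w_0 w_1 < 0$, so your proof matches it in substance, differing only in normalization bookkeeping.
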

\begin{proof}
Choose parameters where $\boldsymbol{\beta} = \mathbf{w}$, such that
\begin{align*}
    w_{\ell} = \sqrt{\frac{2}{k+1}}\sin \frac{(\ell + 1) k \pi}{k+1} \quad
\mathbf{w}^T = [w_{0}, ..., w_{k-1}].
\end{align*}
Let's calculate the inner product $\mathbf{w}^T A_k \mathbf{w}$ in this case. Consider the following $k \times k$ matrix:
\begin{equation*}
B_k = \begin{bmatrix} 0 & b & 0      & 0      & 0      & \cdots \\
b & 0 & b      & 0      & 0      & \cdots \\
0 & b & 0      & b      & 0      & \cdots \\
&   & \ddots & \ddots & \ddots &        \\
0 & 0 & \cdots & b      & 0      & b      \\
0 & 0 & \cdots & 0      & b      & 0\end{bmatrix}
\end{equation*}
Since $B_k$ is a hollow tridiagonal Toeplitz matrix, its eigenvectors and eigenvalues are fully known \citep{noschese2013tridiagonal}. Here, $\mathbf{w}$ is the eigenvector associated with the minimum eigenvalue of $B_k$:
\begin{align*}
\mathbf{w}^T B_k \mathbf{w} = \lambda_{\min}(B_k) = 2b\cos \frac{k \pi}{k+1}
\end{align*}
Because of this, the following inequality holds:
\begin{align*}
\lambda_{\min}(A_k) \leq \mathbf{w}^TA_k\mathbf{w} = 2(a-b)w_0w_1 + \mathbf{w}^TB_k\mathbf{w} = 2(a-b)w_0w_1 + 2b\cos \frac{k \pi}{k+1}
\end{align*}
Since $\arccos$ is decreasing, the cut fraction is lower bounded:
\begin{align*}
E[W] = \frac{1}{\pi}\sum_{(i,j)\in E} \arccos(\mathbf{v}_i \cdot \mathbf{v}_j)
& = \frac{m}{\pi} \arccos(\lambda_{\min}(A_k)) \ge \frac{m}{\pi} \arccos(\mathbf{w}^T A_k \mathbf{w})
\end{align*}
Algebraic manipulation of the expression for $\mathbf{w}^T A_k \mathbf{w}$ finishes the proof.\footnote{Note that because $w_0 w_1 \le 0$ and $a \ge b$, $\mathbf{w}^T A_k \mathbf{w} \le \mathbf{w}^T B_k \mathbf{w} = \lambda_{min}(B_k)$.}
\end{proof}
Assigning $\boldsymbol{\beta}$ as the minimum eigenvector of $B_k$ does not require any additional calculation, because a closed-form expression for the eigenvector is known for all $k$. However, this algorithm has highest performance when assigning $\beta_\ell$ from the minimum eigenvector of $A_k$.
At large $d$ or $k$, the difference between $A_k$ and $B_k$ tends to zero.

When $d \geq 3$ and $k \geq 4$, this is the best known cut fraction for $d$-regular graphs of girth $\geq 2k$.
See Section \ref{sec:bounds} for a comparison of this bound with other algorithms' performance guarantees.

\section{Relationship to the Gaussian wave process}
\label{sec:gaussian-wave}

Consider a vector assignment given by our algorithm, $\{\mathbf{v}_i\}_{i=1}^n$, and let $V = [\mathbf{v}_1 ... \mathbf{v}_n]$.
Hyperplane rounding can now be thought of as sampling from a $n$-variable Gaussian distribution $\mathbf{x} \sim \mathcal{N}(0^n, \Sigma)$ (where $\Sigma = V^T V$), and taking the sign of each variable.
Any multivariate Gaussian distribution can be expressed as linear combinations of i.i.d.\ Gaussians.  In other words and when applied to our setting, $\mathbf{x} = V \mathbf{z}$, where $\mathbf{z}$ encodes $n$ i.i.d\ Gaussian random variables.  This yields
\begin{equation*}
\mathbf{x}_i = \sum_{j \in V} \alpha_{d(i,j)} \mathbf{z}_j = \sum_{\ell = 0}^{k-1} \sum_{j:d(i,j) = \ell} \alpha_\ell \mathbf{z}_j.
\end{equation*}

Thus, our algorithm may be viewed as a linear factor of an i.i.d\ process, presented in \cite{csoka2015invariant} to approximate the Gaussian wave and used in \cite{lyons2017factors}:
\begin{equation*}
\mathbf{x}_i = \sum_{j \in V} \alpha_{d(i,j)} \mathbf{z}_j = \sum_{\ell = 0}^\infty \sum_{j:d(i,j) =\ell} \alpha_\ell \mathbf{z}_j
\end{equation*}
The first equation may be recovered by setting $\alpha_\ell = 0$ for all $\ell \geq k$.

\cite{csoka2015invariant} construct a linear factor of an i.i.d.\ process on the infinite $d$-regular tree to approximate the Gaussian wave process.
They observe that in this setting, one may assume that only a finite number of the $\alpha_\ell$ are nonzero; however, their results are on graphs of sufficiently high girth, and they do not explicitly consider girth as a parameter. 

The approach of Chapter 4.1 in
\cite{deboor2019} is similar to ours.
The main difference is that de Boor uses one free parameter $c$ to determine $\{\alpha_\ell\}_{l=0}^{k-1}$ rather than obtaining the optimal values as we do.
In fact, de Boor's optimal choice of $c$ exactly reproduces the block FIID in Theorem 4.2 of \cite{lyons2017factors}, with the same bound on cut fraction.\footnote{de Boor makes a mathematical mistake; on page 20, his $\zeta^2$ should equal $1 + c^2 d L$.}
Our approach is guaranteed to perform as well any other approach based on a linear factor of an i.i.d.\ process using only $\{\alpha_\ell\}_{\ell=0}^{k-1}$.


\section{Comparison of performance guarantees}
\label{sec:bounds}
The expected cut size given by our algorithm is larger than the guarantee given by Theorem 4.2 of \cite{lyons2017factors}.
We state this formally:
\begin{thm}
\label{thm:comparison}
Given a finite $d$-regular graph $G= (V,E)$ with girth at least $2k$, the explicit vector solution has a higher performance guarantee than that of the algorithm in \cite{lyons2017factors} for every $k \geq 3$ and $d \geq 3$.
\end{thm}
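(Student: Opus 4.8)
The plan is to reduce the comparison to a single scalar inequality between the edge inner products realized by the two algorithms, and then to establish that inequality via the variational characterization of $\lambda_{\min}(A_k)$. Both performance guarantees have the common form $\frac{m}{\pi}\arccos(\,\cdot\,)$, whose argument is the (common) value of $\mathbf v_i\cdot\mathbf v_j$ on edges; since $\arccos$ is strictly decreasing on $[-1,1]$, it suffices to show that the edge inner product produced by our explicit vector solution is strictly smaller than the one produced by the algorithm of \cite{lyons2017factors}. Our solution uses the minimum eigenvector of $A_k$, so its edge inner product is $\lambda_{\min}(A_k)$. For the other side I would invoke Section~\ref{sec:gaussian-wave}: the block FIID of \cite{lyons2017factors} is a linear factor of an i.i.d.\ process supported on $\{\alpha_\ell\}_{\ell=0}^{k-1}$, and, matching the optimal parameter of \cite{deboor2019}, its profile is the truncated bulk wave, i.e.\ the minimum eigenvector $\mathbf w$ of the Toeplitz matrix $B_k$. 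Hence its realized edge inner product is $\mathbf w^T A_k\mathbf w$, which is exactly the closed-form quantity already computed in the proof of Theorem~\ref{thm:main}.

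With this identification the theorem becomes the statement $\lambda_{\min}(A_k) < \mathbf w^T A_k\mathbf w$ for all $k\ge 3$ and $d\ge 3$. The inequality $\le$ is immediate from the Rayleigh quotient, since $\mathbf w$ is a feasible unit vector. To obtain strictness I would show that $\mathbf w$ is not a minimum eigenvector of $A_k$ — indeed, not an eigenvector of $A_k$ at all. The two matrices differ only in the root coupling: $A_k-B_k$ is supported on the $(0,1)$ and $(1,0)$ entries, each equal to $a-b$. Using $B_k\mathbf w=\lambda_{\min}(B_k)\mathbf w$, one computes that $(A_k-B_k)\mathbf w$ is supported on the first two coordinates, with entries $(a-b)w_1$ and $(a-b)w_0$. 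If $\mathbf w$ were an eigenvector of $A_k$ this vector would be a scalar multiple $c\,\mathbf w$; but $a>b$ and $w_0\ne 0$ make the coordinate-$1$ entry $(a-b)w_0$ nonzero, forcing $c\ne 0$ and hence $w_\ell=0$ for every $\ell\ge 2$. This fails for $k\ge 3$, since $w_2=\sqrt{2/(k+1)}\sin\frac{3k\pi}{k+1}\ne 0$ (as $(k+1)\nmid 3$). Thus $\mathbf w$ is not an $A_k$-eigenvector and the Rayleigh inequality is strict.

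The step I expect to be the main obstacle is the identification in the first paragraph: certifying that the guarantee of \cite{lyons2017factors} is exactly $\frac{m}{\pi}\arccos(\mathbf w^T A_k\mathbf w)$ rather than the value of some other feasible profile. This requires carefully matching their block-FIID / Gaussian-wave construction (and the optimal-$c$ parametrization of \cite{deboor2019}) to the truncated bulk wave $\mathbf w$, and confirming that the realized edge correlation is evaluated against the true operator $A_k$, whose root coupling $a$ differs from the bulk coupling $b$. Once this is in place the remaining inequality is elementary, and in fact transparent: the improvement comes entirely from the root correction $a\ne b$ that the bulk wave ignores. As a consistency check, the hypothesis $k\ge 3$ is precisely the threshold at which $\mathbf w$ ceases to be an $A_k$-eigenvector; at $k=2$ one has $\mathbf w^T A_2\mathbf w=-a=\lambda_{\min}(A_2)$, so the two algorithms tie, and strictness first appears at $k=3$, where it reduces to $\lambda_{\min}(A_3)=-\sqrt{a^2+b^2}<-\tfrac{a+b}{\sqrt 2}=\mathbf w^T A_3\mathbf w$, i.e.\ to $(a-b)^2>0$. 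A more computational alternative would bypass the identification and instead compare the explicit closed form of Theorem~4.2 of \cite{lyons2017factors} against the bound of Theorem~\ref{thm:main} as two trigonometric expressions in $d$ and $k$, but the variational route is cleaner and exposes the mechanism of the improvement.
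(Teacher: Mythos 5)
Your reduction via the monotonicity of $\arccos$ is fine, and your Rayleigh-quotient strictness argument (showing $\mathbf{w}$ is not an eigenvector of $A_k$ for $k\ge 3$) is correct mathematics. But the identification on which the entire proof rests --- and which you yourself flagged as the main obstacle --- is false. The guarantee of Theorem~4.2 of \cite{lyons2017factors} is \emph{not} $\frac{m}{\pi}\arccos(\mathbf{w}^T A_k\mathbf{w})$: its relative expectation is $\xi_L=\bigl(1+\frac{d-1}{d(k-1)}\bigr)^{-1}=\frac{k-1}{k-1/d}$, arising from the one-parameter (constant-tail) profile of \cite{deboor2019} at the optimal $c$, not from the sine eigenvector $\mathbf{w}$ of $B_k$. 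Concretely, at $k=d=3$ one has $\xi_L=0.75$, whereas $\mathbf{w}^T A_3\mathbf{w}/(-2b)=\frac{a+b}{2\sqrt{2}\,b}\approx 0.78656$; these are exactly the two \emph{distinct} columns of Table~\ref{table:computedvals}, which would coincide if your identification were correct. Indeed, $\frac{m}{\pi}\arccos(\mathbf{w}^T A_k\mathbf{w})$ is the paper's own Theorem~\ref{thm:main} bound for the explicit vector solution. Consequently, your inequality $\lambda_{\min}(A_k)<\mathbf{w}^T A_k\mathbf{w}$ only shows that the optimal choice $\boldsymbol{\beta}$ (the minimum eigenvector of $A_k$) strictly improves on the closed-form variant $\boldsymbol{\beta}=\mathbf{w}$ --- which is precisely the remark in the footnote to the proof of Theorem~\ref{thm:main} --- and it makes no contact with the algorithm of \cite{lyons2017factors}. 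The theorem is not proved.

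What is actually needed is the inequality $\lambda_{\min}(A_k)<-2b\,\xi_L$, equivalently (after dividing by $-2b$, which flips the inequality) $\xi_{EV}>\frac{k-1}{k-1/d}$. The paper establishes this by lower-bounding $\xi_{EV}\ge\frac{\lambda_{\min}(B_k)}{-2b}=\cos\frac{\pi}{k+1}\ge 1-\frac{1}{2}\bigl(\frac{\pi}{k+1}\bigr)^2$ and reducing the comparison to the condition $d>\bigl(k-\frac{k-1}{1-0.5(\pi/(k+1))^2}\bigr)^{-1}$, whose right-hand side decreases in $k$; this settles every case except $(k=3,\ 3\le d\le 9)$ and $(k=4,\ d=3)$, which are then verified numerically in Table~\ref{table:computedvals}. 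This is essentially the ``computational alternative'' you mention in your final paragraph but do not carry out. Alternatively, your variational mechanism could in principle be repaired: one would take the actual Lyons/de Boor profile $\mathbf{u}$ (constant across shells $1,\dots,k-1$ in the $\alpha$-coordinates, suitably rescaled and normalized), verify that its realized edge correlation is exactly $-2b\,\xi_L=\mathbf{u}^T A_k\mathbf{u}$, and then show $\mathbf{u}$ is not a minimum eigenvector of $A_k$. But that extraction is the step you skipped, and skipping it is what collapsed your comparison into a comparison of the paper's algorithm with itself.
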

\begin{proof}
First consider this form of the expected cut fraction:
\begin{equation*} 
E[W]= \frac{m}{\pi} \arccos(-2 b \cdot \xi)
\end{equation*}
We compare $\xi$ for the explicit vector solution and the algorithm in \cite{lyons2017factors}.
Call $\xi$ the \emph{relative expectation}.
The explicit vector solution has relative expectation
\begin{align*}
\xi_{EV} & 
\geq \frac{\lambda_{\min}(B_k)}{-2 b}  = \cos \frac{\pi}{k+1}   \geq 1 - 0.5 (\frac{\pi}{k+1})^2.
\end{align*}
The algorithm in \cite{lyons2017factors} has relative expectation
\begin{align*}
    \xi_{L} = (1 + \frac{d-1}{d(k-1)})^{-1} = \frac{k-1}{k-1/d}.
\end{align*}
The condition $\xi_{EV} > \xi_{L}$ holds when
\begin{align*}
    & &  1 - 0.5(\frac{\pi}{k+1})^2 &> \frac{k-1}{k-1/d} \\
    &\Leftrightarrow & k - 1/d &> \frac{k-1}{1 -  0.5(\frac{\pi}{k+1})^2} \\
    &\Leftrightarrow &  d &> \Big(k - (\frac{k-1}{1 -  0.5(\frac{\pi}{k+1})^2}) \Big)^{-1}.
\end{align*}
The righthand expression decreases with $k$, so if the condition holds at $k_*$, it holds at all $k \geq k_*$.

Consider $k\in \{3,4,5\}$.
The condition holds when
\begin{align*}
    d &> 9.26 & &(k=3) \\
    d &> 3.82 & &(k=4) \\
    d &> 2.75 & &(k=5).
\end{align*}
All that remains to show is that the relative expectation is higher for ($k=3$, $3 \leq d \leq 9$) and ($k=4$, $d=3$).
A numerical calculation verifies this; see Table \ref{table:computedvals}.
This completes the proof.
\end{proof}

The explicit vector solution also outperforms the QAOA and threshold algorithm, at depth-1 and depth-2, for all $k \geq 3$ and $d \geq 3$.
Our solution has performance increasing with $k$, and the other algorithms have performance increasing with depth, so we compare the $k=3$ solution to the depth-2 algorithms.
Consider the cut fraction $1/2 + c_d/\sqrt{d}$, where $c_d$ is the \emph{normalized value}.
At large $d$, our algorithm has normalized value $c_\infty = \sqrt{2}/\pi > 0.45$, higher than that of the QAOA ($c_\infty \approx 0.41$) and the threshold algorithm ($c_\infty\approx 0.42)$.
The normalized value decreases with $d$ for $d \geq 3$; for example, $c_8 \approx 0.454$.
Compare this with Figure 2 and Table 1 of \cite{Marwaha2021localclassicalmax}; the explicit vector solution outperforms the QAOA and threshold algorithm at all $d \geq 3$.

The modified threshold algorithm of $d \in \{3,4,5\}$ was found with a computer search over a large class of algorithms \citep{Marwaha2021localclassicalmax}.
This outperforms our solution at $k=3$ but not at $k=4$.

\begin{table}
\begin{center}
\begin{tabular}{|c|c|c|c|} \hline
k                  & d & Explicit vector & Lyons bound \\ \hline
\multirow{7}{*}{3} & 3 & 0.78656                      & 0.75000        \\ \cline{2-4}
                   & 4 & 0.76180                      & 0.72727         \\ \cline{2-4}
                   & 5 & 0.74883                      & 0.71428         \\ \cline{2-4}
                   & 6 & 0.74085                      & 0.70588         \\ \cline{2-4}
                   & 7 & 0.73543                      & 0.70000        \\ \cline{2-4}
                   & 8 & 0.73151                      & 0.69565          \\ \cline{2-4}
                   & 9 & 0.72855                    & 0.69230        \\ \hline
\multirow{1}{*}{4} & 3 & 0.85927                    & 0.81818         \\ \hline
\end{tabular}
\end{center}
\caption{\small Comparison of relative expectation of our solution and the lower bound of a previous Gaussian wave algorithm.
The calculation uses the bound $\mathbf{w}^T A_k \mathbf{w}$ given in the proof of Theorem \ref{thm:main}, where $\mathbf{w}$ is the minimum eigenvector of $B_k$.
All values are truncated.}
\label{table:computedvals}
\end{table}

\section{Conclusions}
We provide a procedure to find a large cut on a high-girth graph.
For a given girth $g \ge 8$, this algorithm provides a cut fraction guarantee higher than that of all other classical and quantum algorithms to date.
It gives a simple, explicit solution to the semidefinite program.  Moreover, the process of using hyperplane rounding on this solution can be thought of as a simplification of the Gaussian wave process previously used.

Although our analysis requires $k$ upper bounded by the girth, in practice one can use $k$ equal to the diameter of the graph.
Numerically, this change improves the cut fraction, but a new analysis is needed to prove it.

One can also use the explicit vector technique in other optimization problems.
For example, it can be directly applied to the independent set algorithm in \cite{csoka2015invariant}.
Though this does not improve the performance, further work (for example a tighter bound) could lead to an improvement.

There is recent interest in the development of local improvement algorithms which start with an existing approximation (as opposed to random initialization).
For example, \cite{Egger_2021} implement ``warm-start'' QAOA, initializing the quantum computer with a solution produced by the Goemans-Williamson algorithm.
This technique could also be applied here.
Using the output of our algorithm, one could apply a local threshold procedure to improve the cut.
Although previous results are numerical, analytic bounds may be possible here because of our solution's simple form.

No current algorithm provably achieves the optimal cut fraction on locally tree-like graphs \citep{dembo2017extremal}.
It is suspected that one may exist using approximate message-passing \citep{alaoui2020optimization, montanari2021optimization}; but for now, this is an open question.
This means that better algorithms than the one we propose may exist, including better quantum algorithms.

\section*{Acknowledgements}
KM thanks Boaz Barak for personal guidance. OP and JT are affiliated with Sandia National Laboratories, which is a multimission laboratory managed and operated by National Technology and Engineering Solutions of Sandia, LLC., a wholly owned subsidiary of Honeywell International, Inc., for the U.S. Department of Energy’s National Nuclear Security Administration under contract DE-NA-0003525. OP and JT were supported by the U.S. Department of Energy, Office of Science, Office of Advanced Scientific Computing Research, Accelerated Research in Quantum Computing and Quantum Algorithms Teams programs.

\bibliography{research}

\end{document}